\definecolor{dark-red}{rgb}{0.4,0.15,0.15}
\definecolor{dark-blue}{rgb}{0.15,0.15,0.4}
\definecolor{medium-blue}{rgb}{0,0,0.5}
\definecolor{gray}{rgb}{0.5,0.5,0.5}
\newtheorem{theorem}{Theorem}
\newtheorem{proposition}[theorem]{Proposition}
\newtheorem{corollary}[theorem]{Corollary}
\newtheorem*{thm:main}{Theorem~\ref{thm:main}}
\newcommand\abs[1]{\lvert #1\rvert}
\newcommand\card[1]{\lvert #1 \rvert}
\newcommand\cardd[1]{\vert\lvert #1 \rvert\rvert}
\newcommand\mimw{\operatorname{mimw}}
\newcommand\mimval{\operatorname{mim}}
\newcommand\cM{\mathcal{M}}
\newcommand\cS{\mathcal{S}}
\newcommand*{\defeq}{\mathrel{\vcenter{\baselineskip0.5ex \lineskiplimit0pt
                     \hbox{\scriptsize.}\hbox{\scriptsize.}}}%
                     =}
\newcommand{\parproblemdef}[4]
{
\begin{quote}
\textsc{#1}\\
\textbf{Input:} #2\\
\textbf{Parameter:} #3\\
\textbf{Question:} #4
\end{quote}
}
\theoremstyle{definition}
\theoremstyle{remark}
\newtheorem{claim}[theorem]{Claim}
\newtheorem{observation}[theorem]{Observation}
\newtheorem*{remark*}{Remark}
\newtheorem*{question*}{Open Problem}
\newenvironment{clproof}{\begin{proof}\renewcommand{\qedsymbol}{\claimqed}}{\end{proof}\renewcommand{\qedsymbol}{\plainqed}}
\let\plainqed\qedsymbol
\newcommand\dectree{T}
\newcommand\decf{\mathcal{L}}
\newcommand{\specialindex}[1]{#1+\epsilon}
\begin{document}

\title{A note on the complexity of Feedback Vertex Set parameterized by mim-width}

\author[1]{Lars Jaffke\thanks{Supported by the Bergen Research Foundation (BFS).}}

\author[2]{O-joung Kwon\thanks{Supported by the European Research Council (ERC) under the European Union's Horizon 2020 research and innovation programme (ERC consolidator grant DISTRUCT, agreement No. 648527).}}

\author[1]{Jan Arne Telle}

\affil[1]{Department of Informatics, University of Bergen, Norway. \protect\\ \texttt{\{lars.jaffke, jan.arne.telle\}@uib.no}}
\affil[2]{Logic and Semantics, Technische Universit\"at Berlin, Berlin, Germany. \protect\\ \texttt{ojoungkwon@gmail.com}}

\date\today

\maketitle

\begin{abstract}
	We complement the recent algorithmic result that {\sc Feedback Vertex Set} is $\XP$-time solvable parameterized by the mim-width of a given branch decomposition of the input graph~\cite{JaffkeKT2017b} by showing that the problem is $\W[1]$-hard in this parameterization. The hardness holds even for linear mim-width, as well as for $H$-graphs, where the parameter is the number of edges in $H$.
	To obtain this result, we adapt a reduction due to Fomin, Golovach and Raymond~\cite{FGR17}, following the same line of reasoning but adding a new gadget.
\end{abstract}

\section{Preliminaries}
In this note (which will later be merged with the companion paper~\cite{JaffkeKT2017b}), unless stated otherwise, a graph $G$ with vertex set $V(G)$ and edge set $E(G) \subseteq \binom{V(G)}{2}$ is finite, undirected, simple and connected. We let $\card{G} \defeq \card{V(G)}$ and $\cardd{G} \defeq \card{E(G)}$. For an integer $k > 0$, we let $[k] \defeq \{1,\ldots,k\}$.

For a vertex $v \in V(G)$, we denote by $N(v)$ the set of {\em neighbors} of $v$, i.e.~$N(v) \defeq \{w \mid vw \in E(G)\}$. 

For two graphs $G$ and $H$ we denote by $H \subseteq G$ that $H$ is a {\em subgraph} of $G$ i.e.\ that $V(H) \subseteq V(G)$ and $E(H) \subseteq E(G)$. For a vertex set $X \subseteq V(G)$, we denote by $G[X]$ the subgraph of $G$ {\em induced by $X$} i.e.~$G[X] \defeq (X, E(G) \cap \binom{X}{2})$.
For two (disjoint) vertex sets $X, Y \subseteq V(G)$, 
	we denote by $G[X, Y]$ the bipartite subgraph of $G$ with bipartition $(X, Y)$ such that for $x\in X, y\in Y$, $x$ and $y$ are adjacent in $G$ if and only if they are adjacent in $G[X,Y]$. A {\em cut} of $G$ is a bipartition $(A, B)$ of its vertex set.
	A set $M$ of edges is a \emph{matching} if no two edges in $M$ share an endpoint, and a matching $\{a_1b_1, \ldots, a_kb_k\}$ is  \emph{induced} if there are no other edges in the subgraph induced by 
	$\{a_1, b_1, \ldots, a_k, b_k\}$.
	
	Let $uv \in E(G)$. We call the operation of adding a new vertex $x$ to $V(G)$ and replacing $uv$ by the path $uxv$ the {\em edge subdivision} of $uv$. We call a graph $G'$ a {\em subdivision of $G$} if it can be obtained from $G$ by a series of edge subdivisions.

\smallskip
\noindent{\bf Mim-width.} For a graph $G$ and a vertex subset $A$ of $G$, we define $\mimval_G(A)$ to be the maximum size of an induced matching in $G[A, V(G) \setminus A]$. 

A pair $(\dectree, \decf)$ of a subcubic tree $\dectree$ and a bijection $\decf$ from $V(G)$ to the set of leaves of $\dectree$ is called a \emph{branch decomposition}. If $\dectree$ is a caterpillar, then $(\dectree, \decf)$ is called a {\em linear branch decomposition}.
For each edge $e$ of $\dectree$, 
let $\dectree^e_1$ and $\dectree^e_2$ be the two connected components of $\dectree-e$, and 
let $(A^e_1, A^e_2)$ be the vertex bipartition of $G$ such that for each $i\in \{1,2\}$, 
$A^e_i$ is the set of all vertices in $G$ mapped to leaves contained in $\dectree^e_i$ by $\decf$. 
The {\em mim-width of $(\dectree, \decf)$}, denoted by $\mimw(\dectree, \decf)$, is defined as $\max_{e \in E(\dectree)} \mimval_G(A^e_1)$.
The minimum mim-width over all branch decompositions of $G$ is called the {\em mim-width of $G$}. We define the {\em linear mim-width} accordingly, additionally requiring the corresponding branch decomposition to be linear.
If $\abs{V(G)}\le 1$, then $G$ does not admit a branch decomposition, and the mim-width of $G$ is defined to be $0$.

\smallskip
\noindent{\bf $H$-Graphs.} Let $X$ be a set and $\cS$ a family of subsets of $X$. The {\em intersection graph} of $\cS$ is a graph with vertex set $\cS$ such that $S, T \in \cS$ are adjacent if and only if $S \cap T \neq \emptyset$.
Let $H$ be a (multi-) graph. We say that $G$ is an {\em $H$-graph} if there is a subdivision $H'$ of $H$ and a family of subsets $\cM \defeq \{M_v\}_{v \in V(G)}$ (called an {\em $H$-representation}) of $V(H')$ where $H'[M_v]$ is connected for all $v \in V(G)$, such that $G$ is isomorphic to the intersection graph of $\cM$. 

\section{The Proof}

Very recently, Fomin et al.~\cite{FGR17} showed that $H$-graphs have linear mim-width at most $2\cdot\cardd{H}$ (Theorem 1) and that {\sc Independent Set} is $\W[1]$-hard parameterized by $k + \cardd{H}$, where $k$ denotes the solution size (Theorem 6). This implies that {\sc Independent Set} is $\W[1]$-hard for the combined parameter solution size plus linear mim-width. We will modify their reduction to show that {\sc Maximum Induced Forest} parameterized by the mim-width of a given linear branch decomposition plus the solution size remains~$\W[1]$-hard. We formally define this parameterized problem below.

\parproblemdef
	{Maximum Induced Forest/Linear Mim-Width+$k$}
	{A graph $G$, a linear branch decomposition $(\dectree, \decf)$ of $G$ and an integer $k$.}
	{$w + k$, where $w \defeq \mimw(\dectree,\decf)$.}
	{Does $G$ contain an induced forest on $k$ vertices?}

The reduction is from {\sc Multicolored Clique} where given a graph $G$ and a partition $V_1, \ldots, V_k$ of $V(G)$, the question is whether $G$ contains a clique of size $k$ using precisely one vertex from each $V_i$ ($i \in \{1,\ldots, k\}$). This problem is known to be $\W[1]$-complete~\cite{FHRV09,Pie03}.

\begin{theorem}\label{thm:main}
	{\sc Maximum Induced Forest} is $\W[1]$-hard when parameterized by $w + k$ and the hardness holds even when a linear branch decomposition of mim-width $w$ is given.
\end{theorem}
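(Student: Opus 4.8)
The plan is to give a parameterized reduction from \textsc{Multicolored Clique}, building on the structure of the Fomin--Golovach--Raymond reduction for \textsc{Independent Set} but adding a new gadget to enforce the acyclicity (forest) condition rather than just independence. Given an instance $(G, V_1, \ldots, V_k)$ of \textsc{Multicolored Clique}, I would construct a graph $G'$ together with an explicit linear branch decomposition $(\dectree, \decf)$, and set a target $k'$ for the induced forest so that $G'$ has an induced forest on $k'$ vertices if and only if $G$ has a multicolored clique of size $k$. The crucial design constraint is that both $w = \mimw(\dectree, \decf)$ and $k'$ must be bounded by a function of $k$ alone (not of $\card{G}$), so that $\W[1]$-hardness in $w + k'$ follows.

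The first step is to set up vertex-selection gadgets: for each color class $V_i$ I would introduce a structure whose purpose is to force any large induced forest to ``select'' exactly one vertex of $V_i$, encoding the choice of one clique vertex per color. Following the \textsc{Independent Set} template, these selection gadgets are laid out linearly so that, along the caterpillar spine of $\dectree$, each cut separates the already-processed part from the rest while crossing only a bounded-size induced matching; this is what keeps the mim-width small. The second step is the verification part: for each pair of color classes $(V_i, V_j)$ I would add edges (or a gadget) that is consistent with a forest precisely when the two selected vertices are adjacent in $G$, thereby certifying that the $\binom{k}{2}$ selected vertices pairwise form edges, i.e.\ a clique. The third step, and the genuinely new ingredient relative to \cite{FGR17}, is the \emph{forest-enforcing gadget}: since we are now forbidding cycles rather than edges, I must ensure that a wrong selection or a missing clique-edge manifests as an unavoidable induced cycle (so it cannot coexist with a forest of size $k'$), while a correct multicolored clique leaves the chosen vertices inducing an acyclic structure. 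The natural device is to attach short cycles (e.g.\ triangles or $C_4$'s) through the selection/verification vertices so that ``too many'' or ``inconsistent'' selections close a cycle.

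The main obstacle I anticipate is twofold and entirely concentrated in the forest gadget. First, correctness in the ``no'' direction: I must verify that \emph{every} induced forest of the target size in $G'$ genuinely corresponds to a multicolored clique, which means ruling out ``cheating'' forests that pick vertices in some unintended combination yet still avoid cycles. Because a forest is a strictly weaker object than an independent set, the adversary has more freedom than in the \textsc{Independent Set} reduction, so the gadget must be arranged so that any deviation from the intended selection-plus-clique pattern is forced to create a cycle. Second, and simultaneously, I must recheck that introducing these cycle gadgets does not blow up the mim-width: every new vertex and edge has to be insertable into the linear layout so that each spine cut still induces a matching of size bounded by $f(k)$. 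Balancing these two requirements—rich enough gadgetry to detect non-cliques via cycles, yet sparse enough crossing structure to preserve $w = O(f(k))$—is where the care lies.

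Once the construction is fixed, the remaining steps are routine: I would prove the two directions of the equivalence (a multicolored clique yields an induced forest of size $k'$, and conversely an induced forest of size $k'$ yields a multicolored clique), then exhibit the caterpillar $\dectree$ and the leaf bijection $\decf$ explicitly and bound $\mimval_{G'}(A^e_1)$ at each edge $e$ of the spine by a function of $k$, giving $w = O(f(k))$. Combining the equivalence with the bounds $w + k' \le g(k)$ and the $\W[1]$-completeness of \textsc{Multicolored Clique} yields the theorem. I would also note that the linear branch decomposition is produced directly by the reduction, so the hardness indeed holds ``even when a linear branch decomposition of mim-width $w$ is given,'' and—via the relation that $H$-graphs have linear mim-width $O(\cardd{H})$—that it transfers to the $H$-graph setting parameterized by $\cardd{H}$.
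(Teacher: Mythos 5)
Your plan agrees with the paper's strategy in outline---reduce from \textsc{Multicolored Clique}, reuse the $H$-graph machinery of Fomin, Golovach and Raymond so that the linear mim-width stays bounded by a function of $k$, and add a new gadget that converts ``forbidden adjacency'' into ``forced cycle''---but it stops exactly where the proof has to begin. The forest-enforcing gadget is never specified; you yourself flag its design and the correctness of the ``no'' direction as the anticipated obstacles, and then declare the rest routine. As written this is a research plan, not a proof: no graph $G'$, no target $k'$, no decomposition, and neither direction of the equivalence can even be stated, let alone checked.

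The concretely missing ideas are two devices. First, a single apex vertex $\beta$ adjacent to every vertex of the Fomin et al.\ construction except a designated independent set $A$; since a counting argument forces $\beta$ into any forest of the target size, any two adjacent ``selected'' vertices close a triangle with $\beta$, which is precisely how the independence requirements of the original reduction (both inside a selection clique and between the $z$- and $r$-vertices encoding inconsistent choices) become cycle constraints. This is quite different from your suggestion of attaching triangles or $C_4$'s through the selection vertices, and it is what keeps the extra mim-width cost to $+1$. Second, the independent set $A$ of $2k+2\binom{k}{2}$ vertices $\alpha^i_x,\alpha^i_y,\alpha^{(i,j)}_x,\alpha^{(i,j)}_y$, realized by subdividing a few additional edges of $H$ so that each pair $\alpha^i_x,\alpha^i_y$ is adjacent exactly to the clique $Z(i)$ (and analogously for $R(i,j)$). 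These vertices inflate the target to $k'=3k+3\binom{k}{2}+1$ so that any induced forest of size $k'$ must contain both $\alpha$'s of every gadget and hence---to avoid a triangle $\{\alpha,a,b\}$---at most one vertex from each clique $Z(i)$ and $R(i,j)$, pinning the forest down to the intended ``one vertex per color class, one verification vertex per pair, plus $A$ and $\beta$'' shape. Without these gadgets (or an equivalent mechanism) and the accompanying counting argument, the ``cheating forests'' you worry about are not ruled out, so the proposal cannot be credited as a proof of the theorem.
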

\begin{proof}
	Let $(G, V_1, \ldots, V_k)$ be an instance of {\sc Multicolored Clique}. We can assume that $k \ge 2$ and that $\card{V_i} = p$ for $i \in [k]$. If the second assumption does not hold, let $p \defeq \max_{i \in [k]} \card{V_i}$ and add $p - \card{V_i}$ isolated vertices to $V_i$, for each $i \in [k]$; we denote by $v^i_1, \ldots, v^i_p$ the vertices of $V_i$.
	
	We first obtain an $H$-graph $G''$ from an adapted version of the construction due to Fomin et al.~\cite[Proof of Theorem 6]{FGR17} as follows. The graph $H$ remains the same and is constructed as follows.
	\begin{enumerate}[1.]
		\item Construct $k$ nodes $u_1, \ldots, u_k$.\label{construction:H:vertices}
		\item For every $1 \le i < j \le k$, construct a node $w_{i, j}$ and two pairs of parallel edges $u_i w_{i, j}$ and $u_j w_{i, j}$.\label{construction:H:edges}
	\end{enumerate}
	Note that $\card{H} = k + \binom{k}{2} = k(k+1)/2$  and $\cardd{H} = 4\cdot \binom{k}{2} = 2k(k-1)$. We then construct the subdivision $H'$ of $H$ by first subdividing each edge $p$ times.
	We denote the subdivision nodes for $4$ edges of $H$ constructed for each pair $1 \le i < j \le k$ in Step~\ref{construction:H:edges} by 
$x_1^{(i, j)}, \ldots, x_p^{(i, j)}$, $y_1^{(i, j)}, \ldots, y_p^{(i, j)}$, $x_1^{(j, i)}, \ldots, x_p^{(j, i)}$, and $y_1^{(j, i)}, \ldots, y_p^{(j, i)}$.
	To simplify notation, we assume that $u_i = x_0^{(i ,j)} = y_0^{(i, j)}$, $u_j = x_0^{(j, i)} = y_0^{(j, i)}$ and $w_{i, j} = x^{(i, j)}_{p+1} = y^{(i, j)}_{p + 1} = x^{(j, i)}_{p+1} = y^{(j, i)}_{p + 1}$. 
	
	Furthermore,\footnote{To clarify, we would like to remark that this step (and everything revolving around the resulting vertices) did not appear in the reduction of Fomin et al.~\cite{FGR17} and is vital to make it work for {\sc Maximum Induced Forest}.} for $i \in [k-1]$, we subdivide the edges $x_0^{(i, i+1)} x_1^{(i, i + 1)}$ and $y_0^{(i, i+1)} y_1^{(i, i+1)}$; we also subdivide $x^{(k, k-1)}_0 x^{(k, k-1)}_1$ and $y^{(k, k-1)}_0 y^{(k, k-1)}_1$. We call the new subdivision nodes (in either case) $x^i_{\specialindex{0}}$ and $y^i_{\specialindex{0}}$, for $i \in [k]$, respectively.
	 
	For each $1 \le i < j \le k$, we subdivide the edges $x^{(i, j)}_p x^{(i, j)}_{p+1}$ and $y^{(i, j)}_p y^{(i, j)}_{p+1}$ and denote the new subdivision vertices by $x^{(i, j)}_{\specialindex{p}}$ and $y^{(i, j)}_{\specialindex{p}}$, respectively.	
	We illustrate this subdivision in Figure~\ref{fig:subdivision}.
	
	\begin{figure}
		\centering
		\includegraphics[width=.6\textwidth]{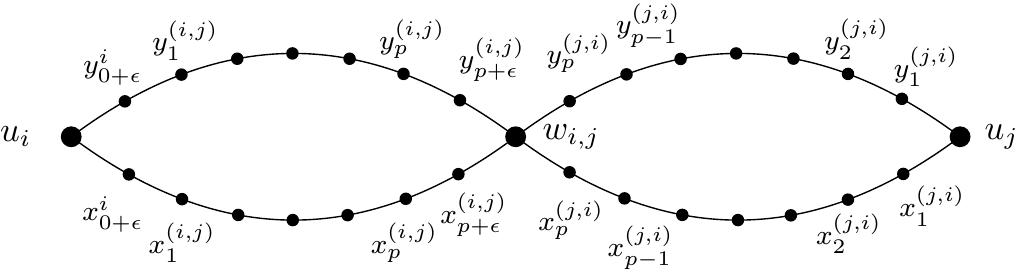}
		\caption{Illustration of the subdivision for a pair $1 \le i < j \le k$, assuming $j = i+1 < k$. For $j \neq i+1$, the vertices $x^i_{0+\epsilon}$ and $y^i_{0+\epsilon}$ do not exist.}
		\label{fig:subdivision}	
	\end{figure}
	
	We now construct the $H$-graph $G''$ by defining its $H$-representation $\cM = \{M_v\}_{v \in V(G')}$ where each $M_v$ is a connected subset of $V(H')$. (Recall that $G$ denotes the graph of the {\sc Multicolored Clique} instance.)
	\begin{enumerate}
		\item  For each $i \in [k]$, construct vertices $\alpha^i_x$ with model $M_{\alpha^i_x} \defeq \{x^i_{\specialindex{0}}\}$ and $\alpha^i_y$ with model $M_{\alpha^i_y} \defeq \{y^i_{\specialindex{0}}\}$.	
		\item For each $i \in [k]$ and $s \in [p]$, construct a vertex $z_s^i$ with model
			\begin{align*}
				M_{z_s^i} \defeq \left\lbrace x^i_{\specialindex{0}}, y^i_{\specialindex{0}}\right\rbrace \cup \bigcup\nolimits_{j \in [k], j \neq i}\left(\left\lbrace x_0^{(i, j)}, \ldots, x_{s - 1}^{(i, j)}\right\rbrace \cup \left\lbrace y_0^{(i, j)}, \ldots, y_{p-s}^{(i, j)}\right\rbrace \right).
			\end{align*}
		\item For each $1 \le i < j \le k$, construct a vertex $\alpha^{(i, j)}_x$ with model $M_{\alpha^{(i, j)}_x} \defeq \left\lbrace x^{(i, j)}_{\specialindex{p}}\right\rbrace$ and a vertex $\alpha^{(i, j)}_y$ with model $M_{\alpha^{(i, j)}_y} \defeq \left\lbrace y^{(i, j)}_{\specialindex{p}}\right\rbrace$.
		\item For each edge $v_s^i v_t^j \in E(G)$ for $s, t \in [p]$ and $1 \le i < j \le k$, construct a vertex $r_{s, t}^{(i, j)}$ with model
			\begin{align*}
				M_{r_{s, t}^{(i, j)}} \defeq &\left\lbrace x^{(i, j)}_{\specialindex{p}}, y^{(i, j)}_{\specialindex{p}} \right\rbrace
							\cup \left\lbrace x_s^{(i, j)},\ldots, x_{p+1}^{(i, j)}\right\rbrace 
							\cup \left\lbrace y_{p-s+1}^{(i, j)},\ldots, y_{p+1}^{(i, j)}\right\rbrace \\
							\cup &\left\lbrace x_t^{(j, i)}, \ldots, x_{p+1}^{(j, i)} \right\rbrace
							\cup \left\lbrace y_{p-t+1}^{(j, i)},\ldots, y_{p+1}^{(j, i)}\right\rbrace.
			\end{align*}	
	\end{enumerate}
	
	Throughout the following, for $i \in [k]$ and $1 \le i < j \le k$, respectively, we use the notation 
		\begin{align*}
			Z(i) \defeq \bigcup\nolimits_{s \in [p]}\left\lbrace z^i_s \right\rbrace  \mbox{ and } R(i, j) \defeq \bigcup\nolimits_{\substack{v^i_s v^j_t \in E(G), \\  s, t \in [p]}} \left\lbrace r_{s, t}^{(i, j)}\right\rbrace
		\end{align*}
	and we let $Z_{+\alpha}(i) \defeq Z(i) \cup \{\alpha^i_x, \alpha^i_y\}$ and $R_{+\alpha}(i, j) \defeq R(i, j) \cup \{\alpha^{(i, j)}_x, \alpha^{(i, j)}_y\}$.	
	We furthermore define
	\begin{align*}
		A \defeq \bigcup\nolimits_{i \in [k]} \left\lbrace \alpha^i_x, \alpha^i_y \right\rbrace \cup \bigcup\nolimits_{1 \le i < j \le k} \left\lbrace \alpha^{(i, j)}_x, \alpha^{(i, j)}_y \right\rbrace.
	\end{align*}	
	
	We obtain the graph $G'$ of the {\sc Maximum Induced Forest} instance by taking the graph $G''$ and adding to it a vertex $\beta$ which is adjacent to all vertices in $V(G'') \setminus A$. We illustrate this construction in Figure~\ref{fig:gprime}.
	
	\begin{figure}
		\centering
		\includegraphics[width=.5\textwidth]{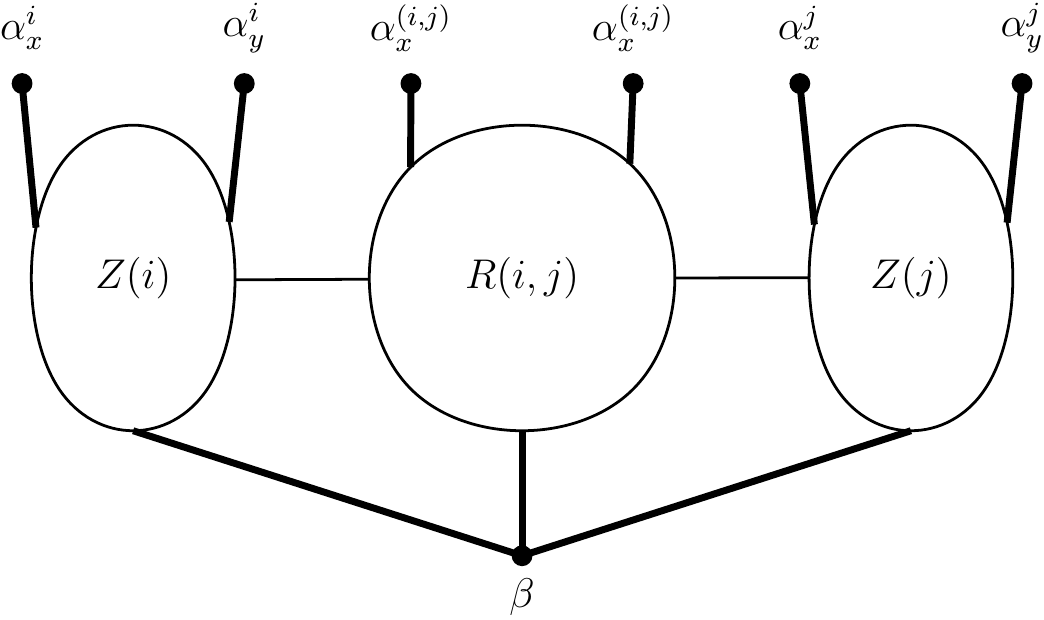}
		\caption{Illustration of a part of $G'$, where $1 \le i < j \le k$. Bold edges imply that all possible edges between the corresponding (sets of) vertices are present.}
		\label{fig:gprime}	
	\end{figure}
	
	We now show that the linear mim-width of $G'$ remains bounded by a function of $k$.\footnote{In fact, we will later show that $G'$ is a $K$-graph for some $K \supseteq H$.}
	\begin{claim}
		$G'$ has linear mim-width at most $4k(k-1)+1$ and a linear branch decomposition of said width can be computed in polynomial time.
	\end{claim}	
	\begin{clproof}
		By \cite[Theorem 1]{FGR17}, $G''$ has linear mim-width at most $2\cardd{H} = 4k(k-1)$. Given a linear branch decomposition of $G''$ we can add a new node to the branch decomposition in any place such that it stays linear and letting the new node be mapped to $\beta$. The resulting branch decomposition is a linear branch decomposition of $G'$ with the mim-value in each cut increased by at most $1$. 
		
		By \cite[Theorem 1]{FGR17} and the construction of the $H$-representation of $G''$ described above, this decomposition can be computed in polynomial time.
	\end{clproof}
	
	We now observe some crucial properties of the above construction.
	\begin{observation}[Claim 7 in~\cite{FGR17}]\label{obs:claim:7}
		For every $1 \le i < j \le k$, a vertex $z_h^i \in V(G')$ (a vertex $z_h^j \in V(G')$) is {\em not} adjacent to a vertex $r^{(i, j)}_{s, t}$ corresponding to the edge $v^i_s v^j_t \in E(G)$ if and only if $h = s$ ($h = t$, respectively).
	\end{observation}	
		
	\begin{observation}\label{obs:A}
		~
		\begin{enumerate}[(i)]
			\item For every $i \in [k]$, $N(\alpha^i_x) = Z(i) = N(\alpha^i_y)$.\label{obs:A:alpha:z}
			\item For every $1 \le i < j \le k$, $N(\alpha^{(i, j)}_x) = R(i, j) = N(\alpha^{(i, j)}_y)$.\label{obs:A:alpha:r}
			\item $A$ is an independent set in $G'$ of size $2k + 2\cdot\binom{k}{2}$.\label{obs:A:IS}
			\item For $i \in [k]$, $Z(i)$ induces a clique in $G'$ and for $1 \le i < j \le k$, $R(i, j)$ induces a clique in $G'$.\label{obs:A:cliques}
		\end{enumerate}
	\end{observation}	
	
	We are now ready to prove the correctness of the reduction. In particular we will show that $G$ has a multicolored clique if and only if $G'$ has an induced forest of size $k' \defeq 3k + 3\binom{k}{2} + 1$.
	\begin{claim}\label{claim:correctness:forward}
		If $G$ has a multicolored clique on vertex set $\left\lbrace v^1_{h_1}, \ldots, v^k_{h_k}\right\rbrace$, then $G'$ has an induced forest of size $k' = 3k + 3 \cdot\binom{k}{2} + 1$.
	\end{claim}
	\begin{clproof}
		Using Observation~\ref{obs:claim:7}, one can easily verify that the set
		\begin{align}
			I \defeq \left\lbrace z^1_{h_1}, \ldots, z^k_{h_k}\right\rbrace \cup \left\lbrace r^{(i, j)}_{h_i, h_j} \mid 1 \le i < j \le k \right\rbrace \label{eq:independent:set}
		\end{align}
		is an independent set in $G'$. By Observation~\ref{obs:A}(\ref{obs:A:IS}) and the construction given above, we can conclude that $F \defeq I \cup A \cup \{\beta\}$ induces a forest in $G'$: $I$ and $A$ are both independent sets and $A \cup I$ induces a disjoint union of paths on three vertices, the middle vertices of which are contained in $I$. The only additional edges that are introduced are between $\beta$ and vertices in $I$, so $F$ induces a tree. Clearly, $\card{F} = \card{I} + \card{A} + \card{\{\beta\}} = k + \binom{k}{2} + 2k + 2\cdot\binom{k}{2} + 1 = k'$, proving the claim.
	\end{clproof}
	
	We now prove the backward direction of the correctness of the reduction. This will be done by a series of claims and observations narrowing down the shape of any induced forest on $k'$ vertices in $G'$. Eventually, we will be able conclude that any such induced forest contains an independent set of size $k + \binom{k}{2}$ of the shape (\ref{eq:independent:set}). We can then conclude that $G$ contains a multicolored clique by Observation~\ref{obs:claim:7}.
	
	The following is a direct consequence of Observation~\ref{obs:A}(\ref{obs:A:cliques}).
	
	\begin{observation}\label{claim:correctness:backward:count}
		Let $F$ be an induced forest in $G'$. Then, $V(F)$ contains
				\begin{enumerate}[(i)]
					\item at most $2$ vertices from $Z(i)$, where $i \in [k]$ and\label{claim:correctness:backward:count:Z}
					\item at most $2$ vertices from $R(i, j)$, where $1 \le i < j \le k$.\label{claim:correctness:backward:count:R}
				\end{enumerate}
	\end{observation}
	
	Next, we investigate the interaction of any induced forest with the sets $Z_{+\alpha}(i)$ and $R_{+\alpha}(i, j)$.
	
	\begin{claim}\label{claim:two:from:zi}
		Let $F$ be an induced forest in $G'$. If $V(F)$ contains two vertices from $Z(i)$, where $i \in [k]$ (from $R(i, j)$, where $1 \le i < j \le k$), then $V(F)$ cannot contain a vertex from $\{\alpha^i_x, \alpha^i_y\}$ (from $\{\alpha^{(i, j)}_x, \alpha^{(i ,j)}_y\}$, respectively).
	\end{claim}	
	\begin{clproof}
		Suppose $V(F)$ contains two vertices $a, b \in Z(i)$. We prove the claim for $\alpha^i_x$ and note that the same holds for $\alpha^i_y$. By Observation~\ref{obs:A}(\ref{obs:A:cliques}), $a$ and $b$ are adjacent and $\alpha^i_x$ is adjacent to both $a$ and $b$ by Observation~\ref{obs:A}(\ref{obs:A:alpha:z}). Hence, $\{\alpha^i_x, a, b\}$ induces a $3$-cycle in $G'$. 
		
		An analogous argument can be given for the second statement.
	\end{clproof}
	
	In the light of Observation~\ref{claim:correctness:backward:count} and Claim~\ref{claim:two:from:zi}, we make
	\begin{observation}
		Let $F$ be an induced forest in $G'$. If $V(F)$ contains three vertices from $Z_{+\alpha}(i)$ for some $i \in [k]$ (three vertices from $R_{+\alpha}(i, j)$, respectively), then this set of three vertices must include $\alpha^i_x$ and $\alpha^i_y$ (resp., $\alpha^{(i, j)}_x$ and $\alpha^{(i, j)}_y$).
	\end{observation}
	
	The previous observation implies that in $G'$, any induced forest on $k' = 3k + 3 \cdot \binom{k}{2} + 1$ has the following form.
	\begin{enumerate}[(I)]
		\item For each $i \in [k]$, $V(F) \cap Z_{+\alpha}(i) = \{\alpha^i_x, \alpha^i_y, z^i_s\}$, for some $s \in [p]$.\label{eq:induced:forest:z}
		\item For each $1 \le i < j \le k$, $V(F) \cap R_{+\alpha}(i, j) = \{\alpha^{(i, j)}_x, \alpha^{(i, j)}_y, r^{(i, j)}_{t, t'}\}$, for some $t, t' \in [p]$.\label{eq:induced:forest:r}
		\item $\beta \in V(F)$.\label{eq:induced:forest:beta}
	\end{enumerate}
	To conclude the proof, we argue that any such induced forest $F$ includes an independent set of size $k + \binom{k}{2}$ of the form (\ref{eq:independent:set}). 
	In particular, we use the following claim to establish the correctness of the reduction.
	
	\begin{claim}\label{claim:independent:set}
		Let $F$ be an induced forest in $G'$ on $k'$ vertices, $1 \le i < j \le k$ and $s_i, s_j, t_i, t_j \in [p]$. If $z^i_{s_i}, r^{(i, j)}_{t_i, t_j}, z^j_{s_j} \in V(F)$, then $s_i = t_i$ and $s_j = t_j$.
	\end{claim}
	\begin{clproof}
		Suppose not and assume wlog.\ that $s_i \neq t_i$. Then, $\left\lbrace\beta, z^i_{s_i}, r^{(i, j)}_{s_t, t_i}\right\rbrace$ induces a $3$-cycle in $G'$: We have that $\beta \in V(F)$ by (\ref{eq:induced:forest:beta}), and by construction $\beta$ is adjacent to all vertices in $Z(i)$ and $R(i, j)$. By Observation~\ref{obs:claim:7} and the assumption that $s_i \neq t_i$, we have that $z^i_{s_i} r^{(i, j)}_{t_i, t_j} \in E(G')$.
	\end{clproof}
	
	Since by (\ref{eq:induced:forest:z}) and (\ref{eq:induced:forest:r}), any induced forest on $k'$ vertices contains precisely one vertex from each $Z(i)$ (for $i \in [k]$) and $R(i, j)$ (for $1 \le i < j \le k$), we can conclude together with Claim~\ref{claim:independent:set} that $V(F)$ contains an independent set 
	\begin{align*}
		\left\lbrace z^1_{s_1}, \ldots, z^k_{s_k}\right\rbrace \cup \left\lbrace r^{(i, j)}_{s_i, s_j} \mid 1 \le i < j \le k \right\rbrace
	\end{align*}
	which by Observation~\ref{obs:claim:7} implies that $G$ has a clique on vertex set $\left\lbrace v^1_{s_1}, \ldots, v^k_{s_k}\right\rbrace$.
\end{proof}

	Since a graph on $n$ vertices has an induced forest of size $k$ if and only if it has a feedback vertex set of size $n-k$, we have the following consequence of Theorem~\ref{thm:main}.

\begin{corollary}
	{\sc Feedback Vertex Set} is $\W[1]$-hard parameterized by linear mim-width, even if a linear branch decomposition of bounded mim-width is given.
\end{corollary}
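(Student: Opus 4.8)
The plan is to derive the statement from Theorem~\ref{thm:main} through the elementary complementarity between the two problems, so that essentially no new combinatorics is needed; the only genuine work is the bookkeeping of the parameter.

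First I would record the duality that drives everything: for any graph $G'$ on $n$ vertices, a set $S \subseteq V(G')$ is a feedback vertex set (i.e.\ $G' - S$ is acyclic) if and only if $V(G') \setminus S$ induces a forest. Hence $G'$ has an induced forest of size $k'$ exactly when it has a feedback vertex set of size $n - k'$. I would then reuse the reduction from the proof of Theorem~\ref{thm:main} unchanged: from a \textsc{Multicolored Clique} instance $(G, V_1, \ldots, V_k)$ I build the same graph $G'$ with the same linear branch decomposition $(\dectree, \decf)$, but now ask whether $G'$ has a feedback vertex set of size $\abs{V(G')} - k'$, where $k' = 3k + 3\binom{k}{2} + 1$. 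Correctness is immediate: $G$ has a multicolored clique if and only if $G'$ has an induced forest of size $k'$ (by Theorem~\ref{thm:main}), which in turn holds if and only if $G'$ has a feedback vertex set of size $\abs{V(G')} - k'$. Since the decomposition is untouched, the Claim inside the proof of Theorem~\ref{thm:main} still guarantees $\mimw(\dectree, \decf) \le 4k(k-1)+1$ together with polynomial-time computability, so a branch decomposition of the stated width is handed over by the reduction.

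The step that actually carries the strengthening — and the one I would be most careful about — is the parameter accounting. The corollary asserts hardness parameterized by linear mim-width \emph{alone}, dropping the solution-size term present in Theorem~\ref{thm:main}. This is legitimate precisely because the width $w \le 4k(k-1)+1$ is itself bounded by a function of $k$, so the map sending the \textsc{Multicolored Clique} parameter $k$ to the new parameter $w$ satisfies the requirement of a parameterized reduction. By contrast, one \emph{cannot} add the feedback-vertex-set solution size $\abs{V(G')} - k'$ to the parameter, since that quantity grows with $n$ and is not bounded by any function of $k$; but it is also not needed. Composing this reduction with the $\W[1]$-hardness of \textsc{Multicolored Clique} parameterized by $k$ then yields the claimed $\W[1]$-hardness of \textsc{Feedback Vertex Set} parameterized by linear mim-width, even with a linear branch decomposition of bounded width supplied.
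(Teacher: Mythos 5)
Your proposal is correct and follows exactly the paper's route: the paper derives the corollary in one line from the complementarity between induced forests of size $k$ and feedback vertex sets of size $n-k$, applied to the reduction of Theorem~\ref{thm:main}. Your additional care about the parameter accounting (that the width $w \le 4k(k-1)+1$ is bounded by a function of $k$, while the feedback vertex set size is not and must be dropped from the parameter) is a correct and worthwhile elaboration of what the paper leaves implicit.
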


We now show additionally that the above reduction can easily be modified to prove $\W[1]$-hardness for {\sc Maximum Induced Forest} and {\sc Feedback Vertex Set} on $H$-graphs when the parameter includes $\cardd{H}$. In particular, we show the following (using the notation from the proof of Theorem~\ref{thm:main}.)

\begin{proposition}\label{prop:K:graph}
	The graph $G'$ is a $K$-graph for some $K \supseteq H$ with $\card{K} = 3\cdot \card{H}$ and $\cardd{K} = \cardd{H} + 2\cdot\card{H}$.
\end{proposition}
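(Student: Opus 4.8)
The plan is to build $K$ from $H$ by hanging two pendant vertices off every node of $H$, and then to re-engineer the $H$-representation of $G''$ so that the extra vertex $\beta$ can be accommodated by a single connected model occupying the whole ``core''.

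Concretely, for every node $g$ of $H$ (that is, every $u_i$ with $i \in [k]$ and every $w_{i,j}$ with $1 \le i < j \le k$) I would introduce two new vertices together with two new edges joining them to $g$. Since $H$ has $\card{H} = k + \binom{k}{2}$ nodes, this adds exactly $2\card{H}$ vertices and $2\card{H}$ edges, so the resulting graph $K \supseteq H$ satisfies $\card{K} = 3\card{H}$ and $\cardd{K} = \cardd{H} + 2\card{H}$, as required; the point is that the two pendants at $g$ will serve as the homes of the two $\alpha$-models associated with $g$, of which there are precisely $2\card{H}$ in total.

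Next I would exhibit a $K$-representation of $G'$. I work in a subdivision $K'$ of $K$ whose restriction to $H$ is subdivided exactly as in the Fomin et al.\ core (each original edge subdivided $p$ times) but \emph{omitting} the extra $\epsilon$-subdivisions; the pendant edges are left unsubdivided. The models of the $z$- and $r$-vertices are taken as before, but with every occurrence of an $\epsilon$-vertex replaced by the corresponding pendant: for $i \in [k]$ I extend each model of $Z(i)$ to contain the two pendants at $u_i$, and for $1 \le i < j \le k$ I extend each model of $R(i,j)$ to contain the two pendants at $w_{i,j}$. Each $\alpha$-vertex then receives as its model the singleton consisting of its own pendant ($\alpha^i_x,\alpha^i_y$ at $u_i$ and $\alpha^{(i,j)}_x,\alpha^{(i,j)}_y$ at $w_{i,j}$). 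Finally, I let the model of $\beta$ be the entire core, i.e.\ the subset of $V(K')$ subdividing $H$; this is connected because $H$ is connected, and it is disjoint from every pendant.

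It then remains to check that this representation realises $G'$. The $z$--$z$, $z$--$r$ and $r$--$r$ adjacencies are untouched, since they are governed only by the core, which is unchanged, and the $\epsilon$-vertices never contributed to them. Each pendant at $u_i$ (resp.\ $w_{i,j}$) is shared precisely by the models of $Z(i)$ and its two associated $\alpha$-vertices (resp.\ $R(i,j)$ and its two $\alpha$-vertices), which reproduces $N(\alpha^i_x) = Z(i) = N(\alpha^i_y)$ and $N(\alpha^{(i,j)}_x) = R(i,j) = N(\alpha^{(i,j)}_y)$, keeps $A$ independent, and introduces no spurious edge. Since every model of a $z$- or $r$-vertex contains the core vertex $u_i$ or $w_{i,j}$, the model of $\beta$ meets all of them, whereas it is disjoint from every $\alpha$-model because those sit on pendants; hence $\beta$ is adjacent to exactly $V(G'') \setminus A$, matching $G'$. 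The main point to verify carefully is precisely this last double requirement: that relocating the $\alpha$-models from the $\epsilon$-subdivision vertices to pendant vertices preserves all internal adjacencies of $G''$ while simultaneously freeing the whole core, so that $\beta$'s model can occupy it without touching $A$.
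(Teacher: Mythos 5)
Your proposal is correct and follows essentially the same route as the paper: attach two pendant neighbors to each node of $H$, place the $\alpha$-models on the pendants, extend the $z$- and $r$-models to the pendants at $u_i$ and $w_{i,j}$ respectively, and give $\beta$ a model covering the core. Your choice of $\beta$'s model as the whole subdivided core (rather than just $V(K)\setminus\Pi$, which would be disconnected in $K'$) is in fact the more careful reading of what the paper intends.
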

\begin{proof}
	The graph $K$ is obtained from $H$ in the following way and is shown in Figure~\ref{fig:k}.
	\begin{enumerate}
		\item For each $i \in [k]$, add to $H$ two neighbors $\pi_i^x$ and $\pi^y_i$ of $u_i$.
		\item For each $1 \le i < j \le k$, add to $H$ two neighbors $\pi_{(i ,j)}^x$ and $\pi_{(i, j)}^y$ of $w_{(i, j)}$.
	\end{enumerate}
	\begin{figure}
	\centering
	\includegraphics[width=.35\textwidth]{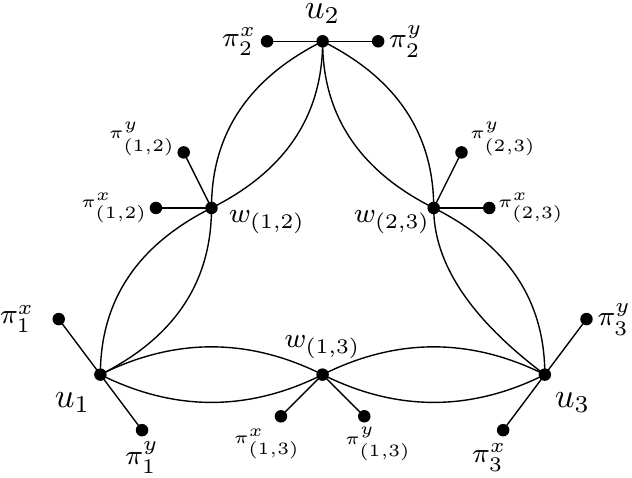}
	\caption{Illustration of the graph $K$ for $k = 3$.}
	\label{fig:k}
\end{figure}
	We let $\Pi \defeq \bigcup\nolimits_{i \in [k]} \{\pi_i^x, \pi_i^y\} \cup\bigcup\nolimits_{1 \le i < j \le k}\{\pi_{(i, j)}^x, \pi_{(i, j)}^y\}$.
	The subdivision $K'$ of $K$ is obtained from subdividing each each edge of $K[V(K) \setminus \Pi]$ $p$ times. (Note that this is the same subdivision done by Fomin et al.~\cite{FGR17}.) The graph $G'$ is now constructed similarly to the construction given in the previous proof, except that we do not have the vertices $x^{\cdot}_{\specialindex{0}}$, $y^{\cdot}_{\specialindex{0}}$, $x^{\cdot}_{\specialindex{p}}$ and $y^{\cdot}_{\specialindex{p}}$ in $K$ and hence in the models of the $K$-representation. For $i \in [k]$, the model of vertex $\alpha^i_x$ becomes $\{\pi_i^x\}$ and the model of $\alpha^i_y$ becomes $\{\pi_i^y\}$. For $1 \le i < j \le k$, the model of $\alpha^{(i, j)}_x$ becomes $\{\pi_{(i, j)}^x\}$ and the model for $\alpha^{(i, j)}_y$ becomes $\{\pi_{(i, j)}^y\}$. Furthermore, the model of each vertex $z^i_s$ includes $\{\pi_i^x, \pi_i^y\}$ and the model of each $r^{(i, j)}_{s, t}$ includes the nodes $\pi_{(i, j)}^x$ and $\pi_{(i, j)}^y$.
	We can now represent the vertex $\beta$ with model $V(K) \setminus \Pi$.
	
	It is straightforward to verify that this procedure gives a $K$-representation of $G'$.
\end{proof}

By Proposition~\ref{prop:K:graph} we have the following consequence of the proof of Theorem~\ref{thm:main}.
\begin{corollary}
	{\sc Maximum Induced Forest} on $H$-graphs is $\W[1]$-hard when parameterized by $k + \cardd{H}$ and {\sc Feedback Vertex Set} on $H$-graphs is $\W[1]$-hard when parameterized by $\cardd{H}$. In both cases, the hardness even holds when an $H$-representation of the input graph is given.
\end{corollary}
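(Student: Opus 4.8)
The plan is to reuse the reduction from the proof of Theorem~\ref{thm:main} verbatim and reinterpret its output through the $K$-graph representation supplied by Proposition~\ref{prop:K:graph}. Concretely, starting from a {\sc Multicolored Clique} instance $(G, V_1, \ldots, V_k)$, I would construct the graph $G'$ exactly as before. By Proposition~\ref{prop:K:graph}, $G'$ is a $K$-graph for some $K \supseteq H$ with $\cardd{K} = \cardd{H} + 2\card{H}$, and the $K$-representation is given explicitly and thus computable in polynomial time. Since $\card{H} = k(k+1)/2$ and $\cardd{H} = 2k(k-1)$ depend only on $k$, the quantity $\cardd{K}$ is bounded by a function of $k$; in particular it does not depend on the size of $G$.

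For the first statement, I would note that the target forest size $k' = 3k + 3\binom{k}{2} + 1$ is likewise a function of $k$. The correctness argument of Theorem~\ref{thm:main} already establishes that $G$ has a multicolored $k$-clique if and only if $G'$ has an induced forest on $k'$ vertices. Reading this as a mapping from {\sc Multicolored Clique} parameterized by $k$ to {\sc Maximum Induced Forest} on $K$-graphs parameterized by $k' + \cardd{K}$, both parameters are bounded by a function of $k$, so the construction is a valid parameterized reduction and yields $\W[1]$-hardness parameterized by $k + \cardd{H}$.

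For the second statement, I would invoke the standard complementation: a graph on $n$ vertices has an induced forest on $k'$ vertices if and only if it has a feedback vertex set of size $n - k'$. Applied to $G'$, this turns the equivalence above into one for {\sc Feedback Vertex Set}, with solution size $\card{G'} - k'$. The key observation is that {\sc Feedback Vertex Set} on $H$-graphs is parameterized by $\cardd{H}$ alone, so it is immaterial that the feedback vertex set size $\card{G'} - k'$ grows with the input: only $\cardd{K}$ enters the parameter, and that is bounded by a function of $k$. Hence the same reduction certifies $\W[1]$-hardness parameterized by $\cardd{H}$.

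The content of the proof is therefore entirely in the parameter bookkeeping rather than in any new combinatorics: I must verify that $k'$ and $\cardd{K}$ are bounded by functions of $k$ and that the $K$-representation is produced in polynomial time, both of which follow immediately from Proposition~\ref{prop:K:graph} together with the closed forms for $\card{H}$ and $\cardd{H}$. The one point worth stating carefully is the contrast between the two parameterizations---{\sc Maximum Induced Forest} needs the solution size in the parameter (since one asks for a forest of a prescribed size), whereas {\sc Feedback Vertex Set} does not, which is exactly why its parameter can be taken to be $\cardd{H}$ alone.
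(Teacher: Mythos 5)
Your proposal is correct and matches the paper's (implicit) argument exactly: the paper derives this corollary directly from Proposition~\ref{prop:K:graph} together with the equivalence established in the proof of Theorem~\ref{thm:main}, plus the standard complementation between induced forests and feedback vertex sets. Your additional care about which quantities enter each parameter ($k' + \cardd{K}$ versus $\cardd{K}$ alone) is exactly the right bookkeeping and does not deviate from the paper's route.
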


\bibliography{fvsmim}
\end{document}